\begin{document}

\title{% 
%%Write here the title of your paper, as
Analysis of the picture cube puzzle
}

\maketitle
%% SINGLE AUTHOR. If you are a single author, please, use the following command and delete 
%%                the \twoauthors command completely. 

\oneauthor{%
 %% Write here the name of the author and his homepage address using \href command  
\href{http://compalg.inf.elte.hu/~bupe}{P\'eter BURCSI} 
}{%
%% Write here your affiliation and its homepage address including maybe your address using \href command. 
%% You can use \\ for line breaks.
\href{http://compalg.inf.elte.hu}{E\"otv\"os Lor\'and University \\Budapest}
}{%
 %% Write here your email.
 \href{mailto:bupe@compalg.inf.elte.hu}{bupe@compalg.inf.elte.hu}
}

%% TWO AUTHORS. If there are two authors, please, use the following command and delete 
%%              the \oneauthor command completely.  

%% Short name of the authors and short title, to be included in heading.
\short{%
%% Write here the short name of authors, using commas.
P. Burcsi
}{%
%%Write here the short title
Analysis of the picture cube puzzle
}

\begin{abstract}
%% Write here the abstract of your paper.
In this paper we give a mathematical model for a game that we call
picture cube puzzle and investigate its properties. The central
question is the number of moves required to solve the puzzle. A
mathematical discussion is followed by the description of computational
results. We also give a generalization of the problem for finite
groups.
\end{abstract}

%% Write here your paper using the section command, theorem-like environments as:
%% definition, theorem, lemma, corollary, criterion, example, exercise, notation, problem, proposition, remark, 
%% conjecture. All these are numbered consecutively. 
%% For proof you can use the proof environment. 

\section{Introduction}

The picture cube puzzle\footnote{These puzzles are usually sold in
Hungary under the name ``mesekocka'', meaning fairy tale cube.} is a
puzzle that consists of usually 6, 9 or 12 painted wooden cubes that can
be arranged in a rectangular pattern to obtain six different pictures
(``the solutions'') seen on the top of the cubes. Each face of the
cubes contains one piece of one of the six pictures in such a way that
the position of that piece within the large picture is the same for
all six faces. Thus, for example, there is a cube whose faces contain
the upper left corners of the six pictures, another one that contains
the lower right corners, etc.

The cubes are painted so that the solutions can easily be transformed
into each other: for each picture there are cube rotations whose
simultaneous application to all cubes transforms the picture to
another one. For example, imagine that the puzzle is solved and you
can see the picture of the dog on top. Pick up the first row of cubes
holding them together and rotate the whole row around the axis through
the centers of the cubes in that row by 90 degrees. Do this
to all the rows and you obtain the picture of the bear. These row-wise
or column-wise rotations are our allowable moves for the puzzle, see Figure \ref{fig1}.

\begin{figure}
\includegraphics[scale=0.25]{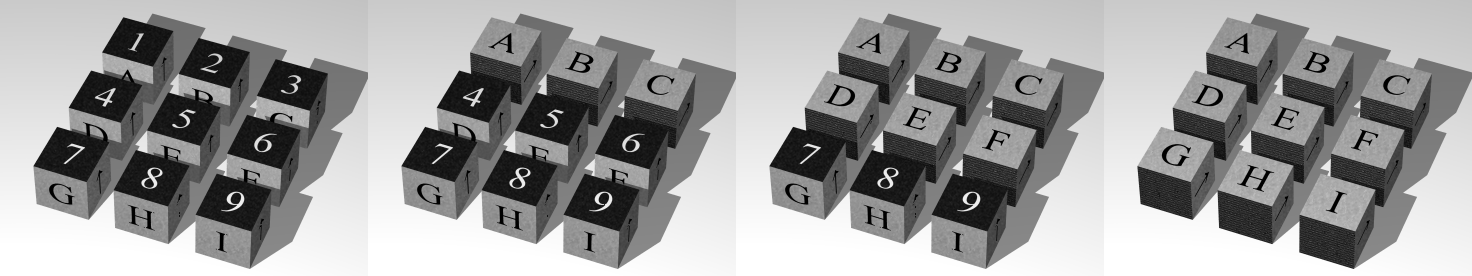}
\caption{A picture cube puzzle. Rotating each row upwards transforms the picture of the numbers  into the picture of the letters. We count this as three moves (there are three rows to rotate).}
\label{fig1}
\end{figure}

The number of allowable moves is twice the sum of the number of rows
and the number of columns in the configuration. No move changes the
position of the cubes within the picture but using the moves we can
change which faces are on top and the orientation of the top
face. A natural question is if we are given an arrangement
of the cubes where the positions are correct but the cubes are
arbitrarily rotated in place, can we solve the puzzle?

For the rest of the paper, we will suppose that one of the solved
pictures is \emph{the} solution configuration, and the cubes are somehow
individually rotated (but kept in place).  There are several
questions that can be raised about the puzzle, we will formalize them
in the following section. What are the configurations that can be reached
from the solution configuration using allowable moves? How many such
configurations exist?  How many moves are needed to reach these configurations
in the worst case/on average?  We will address these questions after
introducing a mathematical model using groups for the puzzle.

These (and other) questions have been raised for more well-known
puzzles, notably Rubik's Cube \cite{dav}. Answering some of them requires
the combination of non-trivial mathematical ideas with large-scale
computer calculations. Diameters for permutation groups have been investigated e.g. in \cite{babai1, babai2}.

The present paper is built up as follows. Section \ref{sec2} gives the mathematical model and a generalization for the puzzle. In Sections \ref{secloc} and \ref{sec4} we discuss solution methods for the cube puzzle. In Section \ref{sec5}, some computational results are presented. We give further research directions and a short summary in Section \ref{sec6}.

We refer to any standard textbook, e.g. \cite{hall}, for basic facts about groups.

\section{Mathematical formulation of the puzzle}\label{sec2}

In order to formalize the puzzle, we need some notation. Let $m$ and
$n$ be two positive integers and imagine $mn$ cubes arranged in an $m$
by $n$ matrix form in front of us on a table. There is a solution
configuration of the cubes when a picture can be seen on the top faces that
we will refer to as "up" face, following standard terminology for
Rubik's Cube. We will refer to the other faces as down, left, right,
front and back faces in the straightforward way. 

There are $2(m+n)$ moves that allow us to modify the configuration of the cubes:
$\textrm{up}_i$ and $\textrm{down}_i$, $(i=1,\ldots, m)$ that rotate the
cubes in row $i$ along the axis through their centers in such a way
that their front faces move to the up or down positions respectively;
and $\textrm{left}_j$ and $\textrm{right}_j$, $(j=1,\ldots, n)$ that
rotate the cubes in column $j$ along the axis through their centers in
such a way that their up faces move to the left or right positions
respectively. We note that two consecutive ``up'' rotations on the same row 
have the same effect as two consecutive ``down'' rotations on that row (similarly
for columns). We count these as two moves---using Rubik's Cube terminology, we
use the quarter-turn metric in the present paper. When these transformations
are counted as one move, we get the half-turn metric.

It is a classical observation in many similar puzzles that sequences
of moves form a group under composition. Allowable moves form a
generator set for that group and we identify the solution of the
puzzle with the identity element. Applying a move to a configuration is
multiplication by a generator from the right. Below we describe the
group and the generators that we will use.

It is a well-known fact of group theory (seen most easily by observing
how the transformations act on the main diagonals) that the rotation
group of the cube is isomorphic to $S_4$, the symmetric group of
degree 4. Thus we can represent any configuration (maybe
 not reachable by legal moves) of the puzzle by a
matrix $T=(g_{ij})\in S_4^{m\times n}$, where $S_4^{m\times n}$ is
itself a group, being the direct product of $mn$ copies of
$S_4$. If we fix a way we represent rotations of a cube by $S_4$ then
there are elements $u, d, l, r\in S_4$ such that turning a single cube
up (resp. down/left/right) corresponds to multiplication (from the
right) by $u$ (resp. $d$/$l$/$r$). Using cycle notation, one
may choose for example $u=(1423)$, $d=(1324)$, $l=(1342)$, $r=(1243)$. Note that all
of them are odd permutations. Then $\textrm{up}_i$ is the
transformation that replaces $g_{ij}$ by $g_{ij} \cdot u$ for
$j=1,\ldots, n$, in other words, $\textrm{up}_i$ is (coordinate-wise)
multiplication in $S_4^{m\times n}$ by an element that has $n$ entries
of $u$ in the $i$th row and the identity element in every other
position, and similarly for the other moves. Thus we may identify
$\textrm{up}_i$ and the other moves by some elements in $S_4^{m \times
n}$. The set of moves in this group will be denoted by $M$.
So $M = \{\textrm{up}_i, \textrm{down}_i \mid 1\leq i\leq m\} \cup \{\textrm{left}_j, \textrm{right}_j\mid 1\leq j\leq n\}$

We may now formulate several questions.  
\begin{itemize} 
\item \textbf{Reachability.} What are the states from which the solution is
reachable using legal moves, or formally: what is the subgroup $H$ of
$S_4^{m\times n}$ generated by $M$? We are also interested in the
order of this subgroup and how membership in this subgroup
(solvability of a configuration) can be decided. Note that reachability between configurations 
 is symmetric since every sequence of moves can be executed "backwards".
\item \textbf{Solution.} Given an element in $H$, what is a sequence of
moves that takes it to the identity (solve the puzzle). We are
interested in human-executable ("easy") and computer-aided ("fast")
techniques as well. What is the shortest sequence of moves, in other
words, what is the shortest product of moves that gives $h\in H$?
\item \textbf{Diameter.} What is the maximum length, taken over elements
$h\in H$, of shortest sequences of moves taking $h$ to the
identity? In other words, how many moves are required in the worst case for solving the puzzle?

\end{itemize}

These questions can be analized by using Cayley graphs that are defined as
follows. 

\begin{definition}
Let $H$ be a group generated by $M$.  The Cayley graph for $H$
and $M$ has $H$ as the set of vertices and there is a directed edge
from $h_1$ to $h_2$ iff for some $m\in M$, $h_1m = h_2$. 
\end{definition}

Solving the
puzzle is the same as finding a path to the identity; and the diameter
of this graph is exactly the length of the longest of all shortest paths from some
$h$ to the identity in the Cayley graph. We return to these questions in the next section.

\subsection{Generalization for arbitrary finite groups}

The mathematical formulation above allows us to generalize the problem
for arbitrary finite groups. Let $G$ be a finite group and let $R, C
\subseteq G$ (row and column moves, respectively). Let $n,m$ be
positive integers and consider the group $G^{m\times n}$ with pointwise multiplication. 
For each $r\in R$ and $1\leq i\leq m$
let $r_i\in G^{m\times n}$ be a matrix which has $n$ entries equal to $r$ in
the $i$th row and the identity element $e\in G$ in other rows. (Here '$r$' stands for row, not to be confused with the $r$ for the original $S_4$ puzzle, where
it is short for right, and is a column move.) For each $c\in
C$ and $1\leq j\leq n$ let $c_j\in G^{m\times n}$ be a matrix which
has $m$ entries $c$ in the $j$th column and the identity element $e\in G$ in
other columns:
\[
r_i = \left( \begin{array}{cccccccc}
e & e & e & \ldots & \ldots  & e & e & e\\
e & e & e & \ldots & \ldots  & e & e &e\\
\vdots & \vdots & \vdots & \vdots &  \ddots & \vdots & \vdots & \vdots\\
e & e & e & \ldots & \ldots  & e &e &e \\
r & r & r & \ldots & \ldots & r &r &r\\
e & e & e & \ldots & \ldots  & e & e &e\\
\vdots & \vdots & \vdots & \vdots & \ddots & \vdots &\vdots & \vdots\\
e & e & e & \ldots & \ldots  & e & e & e\\
\end{array}\right)
\quad
c_j = \left( \begin{array}{cccccccc}
e & e & \ldots & e & c & e & \ldots  & e\\
e & e & \ldots  & e & c & e & \ldots  & e\\
e & e & \ldots  & e & c & e & \ldots  & e\\
\vdots & \vdots & \ddots & \vdots & \vdots & \vdots & \ddots & \vdots \\
\vdots & \vdots & \ddots & \vdots & \vdots & \vdots & \ddots & \vdots \\
e & e & \ldots & e & c & e & \ldots  & e\\
e & e & \ldots  & e & c & e & \ldots  & e\\
e & e & \ldots  & e & c & e & \ldots  & e\\
\end{array}\right)
\]

\begin{definition}
Given $G$, $R$, $C$, $m$ and $n$ as above, the set of row moves is $\{r_i\mid r\in R, 1\leq i\leq m\}$,
the set of column moves is $\{c_j\mid c\in C, 1\leq j\leq n\}$. The set of moves is their union
$M=M(G,R,C,m,n) = \{r_i\mid r\in R, 1\leq i\leq m\} \cup \{c_j\mid
c\in C, 1\leq j\leq n\}$.
\end{definition}

\begin{definition}
Given $G$, $R$, $C$, $m$ and $n$ as above, the set of reachable configurations
is $H = H(G, R, C, m, n) = \langle M \rangle \leq G^{m \times n}$. 
\end{definition}

We can ask what the subgroup $H$ is, how we can give a product of moves
for an $h\in H$ and what the diameter of the Cayley graph
for $H$ and $M$ is. These questions seem hard to answer in general,
so we only focus on a few special cases that will be useful in the $S_4$ case which 
is a special case with $G = S_4$, $R=\{\textrm{up}, \textrm{down}\}$, 
$C=\{\textrm{left}, \textrm{right}\}$.

\subsection{Abelian groups}

If $G$ is Abelian, then so is $G^{m\times n}$, meaning that the order
in which we perform the moves has no effect on their product. Therefore,
instead of sequences of moves, we can speak of sets of moves. Denote
by $a_i$ (resp. $b_j$) the product of moves performed on the $i$th row
(resp. $j$th column). Note that $a_i$ is not necessarily an element in
$R$. Then the product of all the moves performed has the matrix form
$T = (g_{ij})$ where $g_{ij} = a_i b_j$. We claim that the first row
and the first column determine all other elements in $T$. To see this,
let $i, j > 1$, then $g_{ij} = a_i b_j = (a_1 b_j) (a_i b_1) (a_1
b_1)^{-1} = g_{1j}g_{i1}g_{11}^{-1}$.

Consider the special case when $R=C=G$.
\begin{theorem}
If $G$ is Abelian, $R=C=G$ and $n,m\in\mathbb{N}^{+}$, then the reachable configurations are
$H=\{ (a_ib_j)_{i=1,\ldots, m, j=1,\ldots, n} \mid a_i\in G, b_j\in G \}$, 
and we have $|H| = |G|^{m+n-1}$.
\end{theorem}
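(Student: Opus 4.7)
The plan is to establish the two claims (the explicit form of $H$ and the cardinality $|G|^{m+n-1}$) more or less separately, leaning on the observation the author has already made just before the theorem.

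For the first claim, I note that the preceding paragraph already shows the inclusion ``$\subseteq$'': any reachable configuration is a product of row and column moves, and collecting them per row and per column (possible because $G$ is Abelian) yields an element of the form $(a_i b_j)$ where $a_i$ is the accumulated row-$i$ moves and $b_j$ the accumulated column-$j$ moves. For ``$\supseteq$'' I use the hypothesis $R=C=G$: for any target $a_i\in G$ I can realize it as a single row-$i$ move since $a_i\in R$, and similarly for $b_j\in C$; multiplying all these moves together (in any order, since $G$ is Abelian) produces the matrix $(a_ib_j)$. Hence $H$ is exactly the claimed set.

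For the cardinality, the plan is to view the parametrization as a surjective group homomorphism and compute its kernel. Define
\[
\varphi : G^m \times G^n \longrightarrow G^{m\times n},\qquad \varphi(a_1,\ldots,a_m,b_1,\ldots,b_n) = (a_i b_j)_{i,j}.
\]
By the previous paragraph the image of $\varphi$ is exactly $H$. The map is a homomorphism precisely because $G$ is Abelian: the $(i,j)$ entry of $\varphi(a,b)\varphi(a',b')$ is $a_ib_ja'_ib'_j$, which equals $a_ia'_ib_jb'_j$ by commutativity, matching $\varphi(aa',bb')$. The kernel consists of tuples with $a_ib_j=e$ for all $i,j$; fixing $j$ forces all $a_i$ equal to a common value $g:=b_j^{-1}$, and then $b_j=g^{-1}$ for every $j$. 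So $\ker\varphi=\{(g,\ldots,g,g^{-1},\ldots,g^{-1}) : g\in G\}$ has order $|G|$, and the first isomorphism theorem gives $|H|=|G|^{m+n}/|G|=|G|^{m+n-1}$.

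The only subtle step is the kernel computation (equivalently, pinning down exactly how much redundancy there is in the parametrization $(a_i,b_j)\mapsto(a_ib_j)$); everything else is bookkeeping once one commits to the homomorphism framework. An alternative route, which I would mention as a sanity check, is the more concrete observation already flagged by the author: the identity $g_{ij}=g_{1j}g_{i1}g_{11}^{-1}$ shows that each element of $H$ is determined by its first row and first column, i.e.\ by $m+n-1$ entries of $G$ (the entry $g_{11}$ being shared), and conversely any choice of first row and first column extends consistently by this formula, giving $|G|^{m+n-1}$ matrices directly.
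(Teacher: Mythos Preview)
Your proof is correct. For the description of $H$ you argue essentially as the paper does (using the observation preceding the theorem for ``$\subseteq$'' and the hypothesis $R=C=G$ for ``$\supseteq$''). For the cardinality, however, your primary route differs from the paper's: you package the parametrization as a surjective homomorphism $\varphi:G^m\times G^n\to H$, compute $|\ker\varphi|=|G|$, and invoke the first isomorphism theorem. The paper instead argues directly and algorithmically: solve the first row with column moves, then the first column with row moves; since the remaining entries are forced by $g_{ij}=g_{1j}g_{i1}g_{11}^{-1}$, a configuration lies in $H$ iff this procedure solves it, and $|H|$ equals the number of ways to fill the first row and column, namely $|G|^{m+n-1}$. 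Your ``sanity check'' paragraph is precisely this argument. The homomorphism approach has the virtue of making the redundancy in the $(a_i,b_j)$ parametrization explicit and isolating exactly where commutativity enters; the paper's approach has the advantage of simultaneously yielding a solution algorithm and the worst-case move count $m+n-1$, which it also records.
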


\begin{proof}
For any configuration in $G^{m \times n}$, we can solve the first row by using column operations, then solve the remaining elements of the first column by row operations. If the rest is not solved at this point,
then the configuration is not in $H$, because the remaining elements of the configuration are uniquely determined by the first row and the first column for elements of $H$. 
Thus $H$ has configurations of the form $(a_ib_j)_{i=1,\ldots, m, j=1,\ldots, n}$ if we let e.g. $a_i=g_{i1}$ and $b_j = g_{1j}/g_{11}$. It is also clear that $H$ contains every element of this form.
The number of required moves to solve the puzzle is $m+n-1$ in the worst case. The
order of $H$ is $|G|^{m+n-1}$, because that is the number of ways we can adjust the elements in the first row and the first column.
\end{proof}

If we weaken the conditions but require that both $R$ and $C$ generate
$G$, then the solution method can be the same, but for solving the
first row and column, a sequence of moves is required for each
element. 

We note that at least one special case of this Abelian version is part of mathematical folklore: when the group is the two-element group and the only moves are multiplication by the generator. This is often told with coins (and sometimes with lamps) arranged in a matrix form and allowed moves being the simultaneous turning over of entire rows or columns. The question is how we can tell if an initial configuration can be transformed into the "all heads" configuration.

The discussion gets more complicated when at least one of $R$ and $C$
generates only a nontrivial subgroup of $G$, but since we do not need this
for the cube puzzle case, we omit the analysis of this case for brevity.

\subsection{Simple groups}

Another special case is when G is a non-abelian simple group.  We
investigate this case because the method used for solving it -- the
method of commutators -- is also useful for the cube puzzle. 

\begin{definition}
Let $G$ be a group, $g_1, g_2\in G$. The commutator of $g_1$ and $g_2$ is
$[g_1, g_2] = g_1g_2g_1^{-1}g_2^{-1}$. This is the identity element if and only if $g_1$ and $g_2$ commute. Note that in some sources in the literature, the inverses come first in the definition.
\end{definition}

\begin{lemma}
Let $G$ be a group, $R, C\subseteq G$ and $m, n$ as above, $r\in R, c\in C$, $1\leq i\leq m$, $1\leq j\leq n$. Then multiplying a configuration by $[r_i, c_j]$ only affects the entry at position $i,j$.
\end{lemma}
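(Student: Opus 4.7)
The plan is to exploit the fact that $G^{m\times n}$ is a direct product with pointwise multiplication, so any product (in particular the commutator $[r_i,c_j]=r_ic_jr_i^{-1}c_j^{-1}$) can be computed coordinate by coordinate. Once this reduction is made, the statement becomes a trivial case analysis on the four possible patterns of entries at position $(k,\ell)$.

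First I would write down the $(k,\ell)$ entry of $r_i$ and of $c_j$ explicitly from the definitions: the $(k,\ell)$ entry of $r_i$ is $r$ if $k=i$ and $e$ otherwise, while the $(k,\ell)$ entry of $c_j$ is $c$ if $\ell=j$ and $e$ otherwise. Inverses are pointwise, so $r_i^{-1}$ and $c_j^{-1}$ have the same support patterns with $r^{-1}$ and $c^{-1}$ in place of $r$ and $c$.

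Next I would split into four cases according to whether $k=i$ and whether $\ell=j$. The only case in which both factors contribute a non-identity element is $k=i,\ell=j$, giving the entry $rcr^{-1}c^{-1}=[r,c]$. In each of the other three cases at least one of the four factors contributes $e$ at that coordinate, and the remaining letters pair off with their inverses (either $rr^{-1}=e$ or $cc^{-1}=e$) to yield $e$. Hence $[r_i,c_j]$ is the matrix with $[r,c]$ at position $(i,j)$ and the identity everywhere else, so multiplying any configuration by it changes only the $(i,j)$ entry.

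There is no real obstacle here; the statement is essentially a bookkeeping verification. The only thing to be careful about is that the commutator is taken inside $G^{m\times n}$ rather than inside $G$, but since the group operation is coordinatewise this distinction is exactly what makes the localization work.
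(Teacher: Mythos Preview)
Your proof is correct and follows essentially the same approach as the paper: both arguments compute the commutator $[r_i,c_j]$ coordinatewise and observe that at any position other than $(i,j)$ at least one of the two letters is absent, so the remaining letter cancels with its inverse, leaving only $[r,c]$ at position $(i,j)$. The paper phrases the case split slightly differently (first restricting attention to row $i$ and column $j$, then eliminating the off-$(i,j)$ entries of those), but the content is identical.
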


\begin{proof}
Denote the entries of the configuration by $g_{i'j'}$.
The transformation $t = [r_i, c_j] = r_i c_j r_i^{-1} c_j^{-1}$ can only
alter elements in the $i$th row and the $j$th column. But if $j'\neq
j$, then $c_j$ does not modify the elements in column $j'$, hence the
effect of $t$ on $g_{ij'}$ is multiplication by $r_ir_i^{-1} = e$, the
identity. Similarly, $g_{i'j}$ is left intact by $t$ if $i'\neq
i$. Thus, the only element that can be affected is $g_{ij}$. The
effect on this element is multiplication by $[r,c]=rcr^{-1}c^{-1}$. See Figure \ref{fig2} for an illustration.
\end{proof}

\begin{figure}
\includegraphics[scale=0.2]{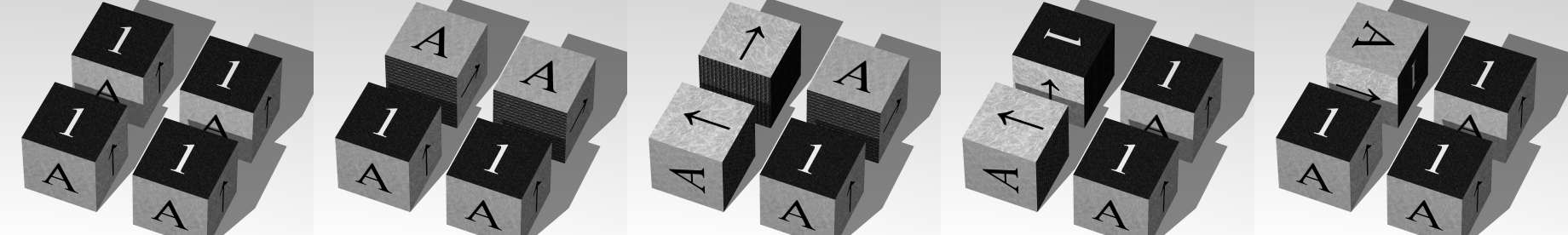}
\caption{The figure illustrates how a commutator of a row move and a column move modifies only one entry. The sequence of moves up$_1$, left$_1$, down$_1$, right$_1$ cancels for all entries except at position $1, 1$.}
\label{fig2}
\end{figure}

Using this method we can individually modify the elements of
$(g_{ij})$ by commutators of the form $[r,c]$ or $[c,r]$. More
generally, using sequences of moves we can individually modify
$g_{ij}$ by $[r,c]$ or $[c,r]$ for $r\in \langle R\rangle$, $c\in
\langle C \rangle$. Hence we have the following result.

\begin{theorem}
Let $G$ be a finite non-abelian simple group, $R=C=G$, $m,n\in \mathbb{N}^{+}$. Then 
$H = H(G, R, C, m, n) = G^{m\times n}$, in other words, every configuration is solvable. 
\end{theorem}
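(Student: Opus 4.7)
The plan is to combine the commutator lemma with a standard structural fact about simple groups, namely that a non-abelian simple group equals its own commutator subgroup.

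First I would observe that, since $R=C=G$, the commutator lemma provides, for every position $(i,j)$ and every pair $r,c\in G$, a sequence of four moves whose net effect is to multiply $g_{ij}$ by $[r,c]$ and leave every other entry unchanged. By concatenating such four-move blocks at the same position $(i,j)$, we can multiply $g_{ij}$ by any finite product of commutators, that is, by an arbitrary element of the commutator subgroup $[G,G]$, while leaving all other entries fixed.

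Next I would argue that $[G,G]=G$. The commutator subgroup is always a normal subgroup of $G$; since $G$ is non-abelian it contains some non-identity commutator, so $[G,G]\neq\{e\}$. Simplicity of $G$ then forces $[G,G]=G$, so every element of $G$ is a product of commutators.

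Putting these two steps together: for any configuration $(g_{ij})\in G^{m\times n}$ and any position $(i,j)$, the element $g_{ij}^{-1}\in G=[G,G]$ can be written as a product of commutators, and the corresponding concatenation of four-move blocks at position $(i,j)$ multiplies that entry by $g_{ij}^{-1}$ without touching any other entry. Iterating this procedure over the $mn$ positions (in any order) transforms an arbitrary configuration into the identity, which proves both solvability of every configuration and the equality $H=G^{m\times n}$.

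The main subtlety, such as it is, lies in the second step: one needs non-abelian simplicity to upgrade the containment $[G,G]\le G$ to equality. Without this hypothesis, the same method would only show $H\supseteq [G,G]^{m\times n}$, which can be a proper subgroup of $G^{m\times n}$, so the simplicity assumption is doing genuine work. Beyond this, everything is a direct consequence of the commutator lemma and does not require further computation.
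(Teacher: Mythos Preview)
Your proposal is correct and follows essentially the same route as the paper: apply the commutator lemma to modify individual entries by arbitrary elements of the commutator subgroup, then use normality of $[G,G]$ together with non-abelian simplicity to conclude $[G,G]=G$, so that each entry can be solved independently. Your write-up is a bit more explicit (spelling out why $[G,G]\neq\{e\}$ and the concatenation of four-move blocks), but the argument is the same.
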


\begin{proof}
The commutator subgroup of $G$, denoted by $G'$ is the subgroup
generated by all commutators $[g_1, g_2]$ for $g_1,g_2 \in G$. 
By the above lemma, we can alter any individual entry of a configuration
by any element in $G'$. But $G'$ is
always normal in $G$, hence if $G$ is a finite simple non-abelian
group, then $G'=G$. Therefore any configuration can be solved, by 
individually solving all entries.
\end{proof}

Note that the theorem also holds if we only assume $\langle R\rangle = \langle C
\rangle = G$ -- note that for $r\in R$, $r^{-1} \in \langle R \rangle$
by the finiteness of $G$. Also note that for Abelian groups, the commutator method is useless,
since we have $G'=\{e\}$. 

If $G$ is a finite simple non-abelian group and $\langle R\rangle =
\langle C \rangle = G$, then the diameter of the Cayley graph can be
bounded both from below and above for general $m$ and $n$ using
constants that depend only on $G, R, C$. The number of vertices of the graph
is $|G|^{mn}$, the graph is $(m|R|+n|C|)$-regular. From this we get
the lower bound $\log (|G|^{mn}) / \log(m|R|+n|C|)-1 = \textrm{const}_1
mn/\log(m|R|+n|C|)$. For the upper bound, note that we can solve each
configuration in $\textrm{const}_2$ moves. Thus the number of moves
required is at most $\textrm{const}_2 \cdot mn$: 
\begin{equation}
\frac{\textrm{const}_1 \cdot mn}{\log(m|R|+n|C|)} \leq \textrm{diam}(\textrm{Cayley}(H, M)) \leq \textrm{const}_2\cdot mn.
\end{equation}
The gap between the lower
and upper bounds is logarithmic in $m$ and $n$. It would be
interesting to reduce this gap. We conjecture that the diameter is
closer to the upper bound.

\section{Local solution method for the cube puzzle}
\label{secloc}

We return to the analysis of the original cube puzzle, where $G=S_4$ and
the row moves are turning up and down ($R = \{u, d\}$),
the column moves are turning left and right ($C = \{l, r\}$). The subgroup in $S_4^{m\times n}$ generated by all moves is again denoted by $H$.
In this section we provide a decision method for solvability and present a method for solving using commutators. The method is simple and 
easy to perform for humans. We suppose that some configuration $s\in S_4^{m \times n}$ is given and 
we want to decide if it is in $H$, and if it is, we want to solve it.

We break the solution of the puzzle into two parts. First, we look for a sequence of moves $t$ that transforms $s$ to $st \in A_4^{m \times n}$, that is, we try to make every entry in $s$ into an even permutation. If this is impossible, then $s\not\in H$. If it is possible, we will solve the puzzle by using commutators. 

For the first part note that since an up, down, left or right move toggles the parity of the affected elements (seen as permutations in $S_4$), the task of making every entry an even permutation can be reduced to solving a puzzle for the two-element multiplicative group $\{-1, 1\}$ where the legal row and column operations are multiplication of the row or column by $-1$.  The Abelian analysis in the previous section shows that this can be examined by solving the first row and the first column and seeing if the rest is solved. If the rest is not solved, there is no solution for the puzzle. If there is a solution, it is found in at most $m+n-1$ steps.
To perform the first part, the player has to remember which rotations correspond to even and odd permutations, but this is not too hard.

In the second part, we individually rotate all $mn$ cubes to their solved positions, using the method of commutators. One readily verifies that commutators of the form $[u_i, l_j]$, $[u_i, r_j]$ etc. (altogether 8 types for a fixed pair $i, j$) are sufficient to transform any element of $A_4$ to the identity in at most 8 steps (two commutators suffice). This part then requires a worst case $8mn$ moves. The overall number of steps needed to solve any solvable configuration using this local method is thus at most $m+n-1+8mn$.

We summarize the above discussion in the following theorem.

\begin{theorem}
Let $m,n\in\mathbb{N}^{+}$. Then $|H| = |H(S_4, R, C, m, n)| = 2^{m+n-1} \cdot 12^{mn}$. The diameter of the Cayley graph of $H$ and $M$ is at most $8mn+m+n-1$.
\end{theorem}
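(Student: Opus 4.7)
The plan is to handle the order formula and the diameter bound separately. The diameter bound is essentially already proved in the discussion preceding the theorem, so the bulk of the work lies in establishing $|H| = 2^{m+n-1}\cdot 12^{mn}$. The natural tool is the coordinate-wise parity (sign) homomorphism
\[
\pi : S_4^{m\times n} \longrightarrow \{\pm 1\}^{m\times n},
\]
whose kernel is $A_4^{m\times n}$. The strategy is to compute $|\pi(H)|$ and $|H \cap A_4^{m\times n}|$ separately and then multiply.

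For the upper bound on $|H|$, note that $u, d, l, r$ are all odd permutations, so the image under $\pi$ of any generator is a matrix that equals $-1$ on a single entire row or column and $+1$ elsewhere. These are exactly the generators of the abelian puzzle from the previous section with $G = \mathbb{Z}/2$; since $\{-1\}$ generates $\mathbb{Z}/2$, the remark following the abelian theorem applies and gives $|\pi(H)| = 2^{m+n-1}$. By Lagrange, $|H| = |\pi(H)|\cdot |H \cap A_4^{m\times n}| \leq 2^{m+n-1}\cdot 12^{mn}$.

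For the matching lower bound I would show $H \cap A_4^{m\times n} = A_4^{m\times n}$, i.e., that every even configuration is reachable. By the commutator lemma each $[r_i,c_j]$ changes only the entry at position $(i,j)$, and the effect there is multiplication by $[r,c]\in A_4$. The eight commutators built from $r\in\{u,d\}$ and $c\in\{l,r\}$ (in either order) generate $A_4$ -- this is precisely the statement, used above in the local solution method, that at most two such commutators suffice to send any element of $A_4$ to the identity. Applying these commutators independently at each of the $mn$ positions realizes every element of $A_4^{m\times n}$ inside $H$, so $|H \cap A_4^{m\times n}| \geq 12^{mn}$, and combined with the upper bound we get the claimed formula.

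The diameter bound is then just a summary of the two-phase local method: the parity correction uses at most $m+n-1$ quarter-turns to drive any $s\in H$ into $A_4^{m\times n}$, after which at most eight moves per cube (two commutators of length four) finish the job, giving at most $m+n-1+8mn$ moves overall. The only piece of the argument that is not purely structural is the assertion that every element of $A_4$ is expressible as a product of at most two of the eight allowed commutators; I expect this to be the main obstacle, but it is a finite check over the twelve elements of $A_4$ and is routine.
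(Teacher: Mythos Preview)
Your proposal is correct and follows essentially the same two-phase argument as the paper: reduce to the abelian $\mathbb{Z}/2$ parity puzzle (giving the factor $2^{m+n-1}$ and the $m+n-1$ moves), then use the row--column commutators to realize all of $A_4^{m\times n}$ (giving the factor $12^{mn}$ and the $8mn$ moves). The only difference is cosmetic: you package the order computation via the sign homomorphism $\pi$ and the first isomorphism theorem, whereas the paper simply states the theorem as a summary of the preceding solvability discussion without isolating the upper and lower bounds separately.
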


\section{Solution method using subgroups}\label{sec4}
Another method borrowing ideas from the Rubik's Cube literature \cite{koc, th}
 is the method of subgroups generated by restricted moves. The method of the previous section can be considered as a special case of this method, but in general, the method is not intended for human execution. The method is most useful when the search space for finding the overall shortest path in the Cayley graph is too large, but a computer-aided search using subgroups finds shorter paths than the commutator-based method.

Take any nontrivial subgroup $H_1$ of $H$, in practice a subgroup with comparable index and order is preferable. Let the input for the method be an $h\in H$, and we proceed in two parts. First find the shortest sequence of moves that takes $h$ into $H_1$, then solve the problem for $H_1$. One can also use a chain of subgroups, but we only consider the case of one subgroup. The subgroups of interest here can admit the following form. Let $K$ be a subset of $\{1,\ldots, m\}$. We allow transformations generated by all possible moves with the restriction that for rows with index $i\in K$, only double moves $u_i^2 = d_i^2$ are allowed (in the half-turn metric this is extremely useful, since these moves count as one move). This method can be used to find a shorter move sequence than the naive method that is not necessarily optimal. The optimal choice of $K$ is a nontrivial issue, in practice, computer experiments can be used to choose a good size for $K$.

\section{Computational results}\label{sec5}
For small values of $m$ and $n$ we can use a computer to determine the diameter or other properties of the Cayley graph in our problem. The number of possible configurations grows exponentially in $nm$, so even for moderate values, we quickly run into memory storage problems. We represented configurations in a compact form using base 24 integer numbers. For the determination of the diameter and average distances, a breadth-search is performed, where the neighbors of a configuration are calculated using pre-stored 24-element rotation arrays -- we listed in advance how the moves transform cube positions.

The tables summarize the results. Table \ref{tab1} lists the number of reachable states, the average and median distance and the maximal distance for various puzzle sizes. Table \ref{tab2} has details about the $3\times 2$ case, listing how many configurations can be found on individual levels of the tree.

\begin{table}
\begin{center}
\begin{tabular}{|c|c|c|c|c|c|}
\hline
Size & Diameter & Average & Median & Local est. & Nr. of conf. \\
\hline
$1\times 1$ & $4$ & $2.17$ & $2$ & $9$ & $24$\\
\hline
$2\times 1$ & $7$ & $4.44$ & $5$ & $18$ & $576$ \\
\hline
$3\times 1$ & $10$ & $6.59$ & $7$ & $27$ & $13824$ \\
\hline
$4\times 1$ & $12$ & $8.59$ & $9$ & $36$ & $331776$ \\
\hline
$2\times 2$ & $12$ & $7.82$ & $8$ & $35$ & $16588$\\
\hline
$5\times 1$ & $15$ & $10.69$ & $11$ & $45$ & $7962624$\\
\hline
$6\times 1$ & $17$ & $12.65$ & $13$ & $54$ &  $191102976$\\
\hline
$3\times 2$ & $14$ & $10.54$ & $11$ & $52$ &  $47775744$\\
\hline
\end{tabular}
\end{center}
\caption{This table lists the maximal/average/median number of optimal moves needed to solve a puzzle of the given size. It also lists the number of moves that the local method from Section \ref{secloc} takes and the number of reachable configurations.}
\label{tab1}
\end{table}

\begin{table}
\begin{center}
\begin{tabular}{|c|c|c|c|c|c|}
\hline
Distance & 0 & 1 & 2 & 3 & 4\\
\hline
Number of points   & 1 & 10 & 69 & 456 & 2846 \\
\hline
\hline
Distance & 5 & 6 & 7 & 8 & 9\\
\hline
Number of points   & 16208 & 84428 & 395566 & 1622641& 5536264  \\
\hline
\hline
Distance & 10 & 11 & 12 & 13 & 14  \\
\hline
Number of points &  13587945 & 17558644 & 8100138 & 843444 & 27084 \\
\hline
\end{tabular}
\end{center}
\caption{The number of reachable configurations that are $0, 1,\ldots, 14$ moves away in the $3\times 2$ puzzle.}
\label{tab2}
\end{table}

\section{Summary and further directions} \label{sec6}
 We gave a model for the
picture cube puzzle which allowed us to answer some naturally arizing
combinatorial questions. Mathematically, the study  of
general $G, R, C$ and the asymptotic analysis of the diameter is the
planned continuation of the present work. From a computational point
of view, the further investigation of the exact values of the diameter
for small $m,n$ in the cube puzzle is our future plan.

Replacing the 2-dimensional configuration by higher dimensional ones
is also a possible extension.

\section*{Acknowledgements}
The Project is supported by the   European Union and co-financed by the European Social Fund (grant agreement no. T\'AMOP 4.2.1/B-09/1/KMR-2010-0003).

\bigskip
\rightline{\emph{Received: March 15, 2012 {\tiny \raisebox{2pt}{$\bullet$\!}} Revised: April 23, 2012}} %% to be completed by the editor  
\end{document}